\def\boldA{\boldsymbol A}
\def\boldb{\boldsymbol b}
\def\boldX{\boldsymbol X}
\def\boldY{\boldsymbol Y}
\def\boldZ{\boldsymbol Z}
\def\boldmu{\boldsymbol\mu}
\def\boldSigma{\boldsymbol\Sigma}
\newtheorem{thm}{Theorem}
\newtheorem{lem}[thm]{Lemma}
\newtheorem{prop}[thm]{Proposition}
\title{\Large\bf A Constructive Algebraic Proof of Student's Theorem}
\author{Yiping Cheng}
\affil{\vskip -3pt School of Electronic and Information Engineering\\
Beijing Jiaotong University, Beijing 100044, China\\
\texttt{ypcheng@bjtu.edu.cn}}
\begin{abstract}
Student's theorem is an important result in statistics which
states that for normal population, the sample variance is
independent from the sample mean and has a chi-square
distribution. The existing proofs of this theorem either overly
rely on advanced tools such as moment generating functions, or
fail to explicitly construct an orthogonal matrix used in the
proof. This paper provides an elegant explicit construction of
that matrix, making the algebraic proof complete. The constructive
algebraic proof proposed here is thus very suitable for being
included in textbooks.
\end{abstract}
\keywords{sample variance; chi-square distribution;
t-distribution; statistical education}
\begin{document}

\maketitle

\section{Student's Theorem}

In mathematical statistics, there is a well-known theorem about
the sample variance of a random sample from a normal distribution.
This theorem is directly related to the discovery of the
t-distribution by statistician William Sealy Gosset (1876-1937),
known as ``Student", a pseudonym he used when he published his
paper. Therefore, this theorem is often referred to as Student's
theorem. Let $N(\mu,\sigma^2)$ denote the normal distribution with
mean $\mu$ and variance $\sigma^2$. Then the theorem reads as
follows.

\begin{thm}[Student's Theorem]
Let $X_1,\ldots,X_n$ be a random sample from the distribution
$N(\mu,\sigma^2)$, i.e., they all have that distribution and are
mutually independent. Define the random variables \begin{eqnarray}
\overline{X} &= & {\sum_{i=1}^n X_i\over n}, \\
S^2 & = & {\sum_{i=1}^n (X_i-\overline{X})^2\over n-1}.
\end{eqnarray} Then

1. $\overline{X}$ has distribution $N(\mu,{\sigma^2\over n})$.

2. $\overline{X}$ and $S^2$ are independent.

3. ${(n-1)S^2\over\sigma^2}$ has distribution $\chi^2(n-1)$.
\end{thm}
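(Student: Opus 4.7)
The plan is to reduce the problem to the standard normal case and then apply an orthogonal transformation. First I would standardize by setting $Z_i = (X_i-\mu)/\sigma$, so that $Z_1,\ldots,Z_n$ are independent $N(0,1)$ variables. Observe that $\overline{X} = \mu + \sigma\overline{Z}$ and $(n-1)S^2/\sigma^2 = \sum_{i=1}^n(Z_i-\overline{Z})^2$, so all three conclusions reduce to statements about the standardized sample mean $\overline{Z}$ and the standardized sum of squared deviations.

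The key step is to construct an $n\times n$ orthogonal matrix $\boldA$ whose first row equals $(1/\sqrt{n},\ldots,1/\sqrt{n})$, and then set $\boldY = \boldA\boldZ$. Because an orthogonal linear map sends the standard multivariate normal to itself, $Y_1,\ldots,Y_n$ are again independent $N(0,1)$. By the choice of first row, $Y_1 = \sqrt{n}\,\overline{Z}$, so $\overline{X} = \mu + \sigma Y_1/\sqrt{n}$ has distribution $N(\mu,\sigma^2/n)$, which is part~1. Orthogonality preserves the Euclidean norm, so
\begin{equation*}
\sum_{i=2}^n Y_i^2 \;=\; \sum_{i=1}^n Z_i^2 - Y_1^2 \;=\; \sum_{i=1}^n Z_i^2 - n\overline{Z}^2 \;=\; \sum_{i=1}^n (Z_i-\overline{Z})^2 \;=\; {(n-1)S^2\over\sigma^2}.
\end{equation*}
Since $Y_1$ is independent of $(Y_2,\ldots,Y_n)$, $\overline{X}$ (a function of $Y_1$ alone) is independent of $S^2$ (a function of $Y_2,\ldots,Y_n$ alone), giving part~2; and the right-hand side above is a sum of $n-1$ squared independent standard normals, so it has distribution $\chi^2(n-1)$ by definition, giving part~3.

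The main obstacle, and the novelty announced in the abstract, is the explicit construction of the orthogonal matrix $\boldA$. Abstractly, existence is immediate: extend $(1/\sqrt{n},\ldots,1/\sqrt{n})$ to an orthonormal basis by Gram--Schmidt. For a genuinely constructive algebraic proof one instead wants a closed form, and the natural candidate is the Helmert matrix, whose $k$th row for $k\geq 2$ has entries $1/\sqrt{k(k-1)}$ in columns $1,\ldots,k-1$, the entry $-(k-1)/\sqrt{k(k-1)}$ in column $k$, and zeros elsewhere. I expect the paper's construction to be of this flavor, and the main algebraic work to consist in a direct verification that the prescribed rows are pairwise orthogonal and of unit length.
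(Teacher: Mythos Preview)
Your proposal is correct and follows essentially the same route as the paper: reduce to the standardized version, apply the Helmert orthogonal transformation, and read off the three conclusions from independence of the $Y_i$ and norm preservation. The only cosmetic difference is that the paper places the constant row $(1/\sqrt{n},\ldots,1/\sqrt{n})$ as the \emph{last} row of its matrix $\boldsymbol O_n$ rather than the first, so that $Y_n$ rather than $Y_1$ equals $\sqrt{n}\,\overline{Z}$; your Helmert rows for $k\ge 2$ coincide with the paper's rows $1,\ldots,n-1$ under the reindexing $i=k-1$.
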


This theorem is equivalent to the following version where the
general normal distribution is replaced by standard normal
distribution.

\begin{thm}[Student's Theorem, Standardized Version]
Let $Z_1,\ldots,Z_n$ all have distribution $N(0,1)$ and are
mutually independent. Define the random variables \begin{eqnarray}
\overline{Z} &= & {\sum_{i=1}^n Z_i\over n}, \\
W & = & \sum_{i=1}^n (Z_i-\overline{Z})^2.
\end{eqnarray} Then

1. $\sqrt{n}\thinspace\overline{Z}$ has distribution $N(0,1)$.

2. $\overline{Z}$ and $W$ are independent.

3. $W$ has distribution $\chi^2(n-1)$.
\end{thm}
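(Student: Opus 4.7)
My plan is to reduce everything to a single orthogonal change of variable. I would seek an $n\times n$ orthogonal matrix $\boldA$ whose first row equals $(1/\sqrt{n},\ldots,1/\sqrt{n})$, then set $\boldY=\boldA\boldZ$ where $\boldZ=(Z_1,\ldots,Z_n)^T$. Since $\boldZ\sim N(\mathbf{0},I_n)$ and orthogonality gives $\boldA I_n \boldA^T=I_n$, the transformed vector $\boldY$ is again standard multivariate normal, so $Y_1,\ldots,Y_n$ are mutually independent $N(0,1)$ variables. The first coordinate is $Y_1=\sqrt{n}\,\overline{Z}$, which immediately proves part~1.

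For parts~2 and~3 the key identity is that orthogonal transformations preserve squared norm: $\sum_{i=1}^n Y_i^2=\sum_{i=1}^n Z_i^2$. Combined with the algebraic identity $\sum_{i=1}^n(Z_i-\overline{Z})^2=\sum_{i=1}^n Z_i^2 - n\overline{Z}^2$, this gives $W=\sum_{i=2}^n Y_i^2$. Hence $W$ is a measurable function of $(Y_2,\ldots,Y_n)$ only, which is independent of $Y_1=\sqrt{n}\,\overline{Z}$ by the mutual independence of the coordinates of $\boldY$; this yields part~2. Part~3 then follows because $W$ is a sum of $n-1$ independent squared standard normal variables, which by definition has distribution $\chi^2(n-1)$.

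The hard part, which the abstract explicitly flags as the gap in existing textbook treatments, is producing an \emph{explicit} orthogonal matrix $\boldA$ with the prescribed first row. A non-constructive existence argument via Gram--Schmidt applied to any basis containing the unit vector $(1/\sqrt{n},\ldots,1/\sqrt{n})$ is routine but opaque and tends to obscure the proof. My attempt would be a Helmert-style matrix in which, for $k\ge 2$, row $k$ has equal entries in columns $1$ through $k-1$, one distinguished entry in column $k$, and zeros in columns $k+1$ through $n$; choosing the magnitudes as $1/\sqrt{k(k-1)}$ and $-(k-1)/\sqrt{k(k-1)}$ respectively makes each row a unit vector orthogonal to all earlier rows, a fact that can be verified directly by a short row-by-row calculation. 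Such an explicit recipe would render the entire algebraic argument self-contained and suitable for inclusion in an undergraduate textbook.
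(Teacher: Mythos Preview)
Your proposal is correct and follows essentially the same route as the paper: an orthogonal change of variable using an explicitly constructed Helmert-type matrix, followed by the norm-preservation identity to express $W$ as the sum of the remaining $n-1$ squared coordinates. The only cosmetic difference is that the paper places the constant row $(1/\sqrt{n},\ldots,1/\sqrt{n})$ as the \emph{last} row of $\boldsymbol O_n$ (so $Y_n=\sqrt{n}\,\overline{Z}$ and $W=\sum_{i=1}^{n-1}Y_i^2$) rather than the first, with the Helmert rows indexed as $o_{ij}=1/\sqrt{i(i+1)}$ for $j\le i$ and $o_{i,i+1}=-i/\sqrt{i(i+1)}$---which is exactly your row $k$ after the relabeling $i=k-1$.
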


Since these two versions are equivalent, and it is easier to
formulate a proof of the standardized version, in the rest of the
paper the standardized version will be used when we give our
proof.

\section{Literature Proofs of Student's Theorem}

To the author's best knowledge, the original paper of Gosset is
not currently available to the general public, so we do not know
if it contained a proof of the above theorem. However, it is
believed that even if such a ``proof" did exist, it could hardly
be regarded as a proof by today's standard, because the
mathematically rigorous theory of probability only began to emerge
in 1930s. We therefore should look into the modern literature,
mainly textbooks, for proofs of Student's theorem. In one way or
another, all the proofs rely on two important theorems of
multivariate normal distribution, whose proofs require a very deep
mathematical tool: moment-generating functions (m.g.f. in the
sequel), or alternatively, characteristic functions. These two
theorems are familiar to the majority of statistics students. They
are given here as lemmas.

\begin{lem}
Let random variables $X_1,\ldots,X_n$ have the multivariate normal
distribution with mean $\boldmu$ and covariance matrix
$\boldSigma$. Let $\boldY=[Y_1,\ldots,Y_m]^T=\boldA\boldX+\boldb$,
where $\boldA$ is an $m\times n$ {\em full row-rank\/} constant
matrix, $\boldX=[X_1,\ldots,X_n]^T$, and
$\boldb=[b_1,\ldots,b_m]^T$ is a constant column vector. Then
$Y_1,\ldots,Y_m$ have the multivariate normal distribution with
mean $\boldA\boldmu+\boldb$ and covariance matrix
$\boldA\boldSigma\boldA^T$.
\end{lem}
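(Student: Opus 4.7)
The plan is to invoke the uniqueness theorem for moment-generating functions, which the paper itself acknowledges as the standard tool underlying such results. Specifically, I will compute the MGF of $\boldY$ and show that it matches the MGF of a multivariate normal with mean $\boldA\boldmu+\boldb$ and covariance $\boldA\boldSigma\boldA^T$; uniqueness of MGFs then identifies the distribution of $\boldY$.

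The first step is to recall the known MGF of the multivariate normal: for $\boldX$ with mean $\boldmu$ and covariance $\boldSigma$,
\[ M_\boldX(\boldt)=\exp\bigl(\boldt^T\boldmu+\tfrac{1}{2}\boldt^T\boldSigma\boldt\bigr), \]
which is typically taken either as a definition of the multivariate normal or else established via a Gaussian integral from the density.

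The second step is a direct computation. Since $\boldY=\boldA\boldX+\boldb$ and the constant $\boldb$ pulls out of the expectation,
\[ M_\boldY(\boldt)=E[\exp(\boldt^T\boldY)]=\exp(\boldt^T\boldb)\,E[\exp((\boldA^T\boldt)^T\boldX)]=\exp(\boldt^T\boldb)\,M_\boldX(\boldA^T\boldt). \]
Substituting the formula above for $M_\boldX$ and regrouping the linear and quadratic terms in $\boldt$ yields
\[ M_\boldY(\boldt)=\exp\bigl(\boldt^T(\boldA\boldmu+\boldb)+\tfrac{1}{2}\boldt^T(\boldA\boldSigma\boldA^T)\boldt\bigr), \]
which is exactly the MGF of the claimed multivariate normal. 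By uniqueness of MGFs, this identifies the distribution of $\boldY$.

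The main obstacle, such as it is, lies not in the computation itself but in understanding why the full row-rank hypothesis on $\boldA$ is imposed. The MGF manipulation above goes through for \emph{any} $m\times n$ matrix $\boldA$, so the rank assumption must be serving a separate purpose: it ensures that $\boldA\boldSigma\boldA^T$ is positive definite whenever $\boldSigma$ is, so that $\boldY$ qualifies as a genuine $m$-dimensional multivariate normal vector possessing a density on $\mathbb{R}^m$, rather than a degenerate normal supported on a lower-dimensional affine subspace. I would flag this explicitly so the reader sees precisely where the hypothesis is used, even though it plays no role in the algebra of the MGF identification.
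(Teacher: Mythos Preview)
Your argument is correct, and it is precisely the moment-generating-function approach that the paper itself points to when it remarks that the proofs of Lemmas~3 and~4 ``require a very deep mathematical tool: moment-generating functions.'' However, the paper does \emph{not} actually supply a proof of this lemma: it is stated without proof as a standard background result, with the reader referred implicitly to the textbook literature. So there is nothing in the paper to compare your proof against beyond that one-line acknowledgment of the MGF method, with which your approach is fully consistent. Your closing remark on the role of the full row-rank hypothesis---guaranteeing that $\boldA\boldSigma\boldA^T$ is positive definite so that $\boldY$ has a genuine $m$-dimensional density---is a helpful clarification that the paper does not make explicit.
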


\begin{lem}
Let random variables $X_1,\ldots,X_n$ have the multivariate normal
distribution with mean $\boldmu$ and covariance matrix
$\boldSigma$. Define random vectors $\boldX$, $\boldX_1$, and
$\boldX_2$ as
\[\boldX^T=[\underbrace{X_1,\ldots,X_r}_{\boldX_1^T},\underbrace{X_{r+1},\ldots,X_n}_{\boldX_2^T}].\]
Partition $\boldSigma$ as \[\boldSigma=\left[\begin{array}{cc}
\underbracket{\boldSigma_{11}}_{r\times r} & \underbracket{\boldSigma_{12}}_{r\times(n-r)} \\
\underbracket{\boldSigma_{12}^T}_{(n-r)\times r} &
\underbracket{\boldSigma_{22}}_{(n-r)\times(n-r)}
\end{array}\right].\]
Then $\boldX_1$ and $\boldX_2$ are independent if and only if
$\boldSigma_{12}=\mathbf{0}$.
\end{lem}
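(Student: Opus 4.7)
The plan is to prove the two directions separately. The ``only if'' direction is essentially independent of the multivariate normality assumption: if $\boldX_1$ and $\boldX_2$ are independent random vectors, then for every $i\le r$ and every $j>r$ the scalar random variables $X_i$ and $X_j$ are independent, so $\mathrm{Cov}(X_i,X_j)=0$, which is exactly the $(i,j-r)$ entry of $\boldSigma_{12}$. Hence $\boldSigma_{12}=\mathbf{0}$.

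For the ``if'' direction my intended approach is to use the joint moment-generating function, as is standard for results of this type. Writing $\boldsymbol{t}=[\boldsymbol{t}_1^T,\boldsymbol{t}_2^T]^T$ with the same block structure as $\boldX$, I would start from
\[M_{\boldX}(\boldsymbol{t})=\exp\Bigl(\boldsymbol{t}^T\boldmu+\tfrac{1}{2}\boldsymbol{t}^T\boldSigma\boldsymbol{t}\Bigr),\]
expand $\boldsymbol{t}^T\boldSigma\boldsymbol{t}$ in block form, and use $\boldSigma_{12}=\mathbf{0}$ to kill the cross-terms. The linear part splits as $\boldsymbol{t}^T\boldmu=\boldsymbol{t}_1^T\boldmu_1+\boldsymbol{t}_2^T\boldmu_2$. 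Together these two observations produce the clean factorization
\[M_{\boldX}(\boldsymbol{t})=\exp\Bigl(\boldsymbol{t}_1^T\boldmu_1+\tfrac{1}{2}\boldsymbol{t}_1^T\boldSigma_{11}\boldsymbol{t}_1\Bigr)\cdot\exp\Bigl(\boldsymbol{t}_2^T\boldmu_2+\tfrac{1}{2}\boldsymbol{t}_2^T\boldSigma_{22}\boldsymbol{t}_2\Bigr).\]
Applying Lemma~1 to the two projection matrices $[\boldsymbol{I}_r,\mathbf{0}]$ and $[\mathbf{0},\boldsymbol{I}_{n-r}]$ identifies the two factors as the m.g.f.s\ of $\boldX_1$ and $\boldX_2$, so $M_{\boldX}(\boldsymbol{t})=M_{\boldX_1}(\boldsymbol{t}_1)\,M_{\boldX_2}(\boldsymbol{t}_2)$.

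The final step is to invoke the uniqueness theorem for m.g.f.s: since the m.g.f.\ of a multivariate normal is finite in a neighbourhood of the origin (in fact everywhere), it determines the joint distribution uniquely. Hence the factorization above forces the joint law of $(\boldX_1,\boldX_2)$ to equal the product of its marginals, which is the definition of independence.

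The main obstacle is exactly what the paper's introduction flags: the argument rests on the uniqueness theorem for m.g.f.s, a genuinely non-trivial analytic result. An elementary backup I would consider is to work with densities: assume first that $\boldSigma$ is nonsingular, write down the joint Gaussian density, and use $\boldSigma_{12}=\mathbf{0}$ together with the block-inversion identity $\boldSigma^{-1}=\mathrm{diag}(\boldSigma_{11}^{-1},\boldSigma_{22}^{-1})$ to split both the exponent and the determinant, reading off $f_{\boldX}(\boldsymbol{x})=f_{\boldX_1}(\boldsymbol{x}_1)f_{\boldX_2}(\boldsymbol{x}_2)$; the singular case can be recovered by a $\boldSigma+\varepsilon\boldsymbol{I}$ perturbation and a limiting argument. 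This route stays within linear algebra and calculus but is noticeably less slick than the m.g.f.\ proof.
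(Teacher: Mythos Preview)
The paper does not actually prove this lemma; it is stated without proof as a standard background result, with the explicit remark that its proof ``require[s] a very deep mathematical tool: moment-generating functions \ldots\ or alternatively, characteristic functions.'' Your m.g.f.\ argument is correct and is precisely the route the paper alludes to, so there is no discrepancy to report.
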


A consequence of Lemma 4 is the following proposition, which we
will use later.

\begin{prop}
Let random variables $X_1,\ldots,X_n$ have multivariate normal
distribution with covariance matrix $\boldSigma$. Then
$X_1,\ldots,X_n$ are mutually independent if and only if
$\boldSigma$ is a diagonal matrix.
\end{prop}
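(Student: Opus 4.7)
The plan is to prove the two directions separately, with Lemma 4 doing the real work on the nontrivial direction.

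The forward direction (mutual independence $\Rightarrow$ $\boldSigma$ diagonal) is immediate: mutual independence implies pairwise independence, hence $\mathrm{Cov}(X_i,X_j)=0$ for every $i\ne j$, so all off-diagonal entries of $\boldSigma$ vanish.

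For the converse, I would proceed by induction on $n$. The base case $n=2$ is a direct instance of Lemma 4 with $r=1$: the off-diagonal entry is the only entry of $\boldSigma_{12}$, and it is zero by hypothesis, so $X_1$ and $X_2$ are independent. For the inductive step, assuming the result for $n-1$, I apply Lemma 4 with $r=1$ to the $n$-variable vector; since $\boldSigma$ is diagonal, the $1\times(n-1)$ block $\boldSigma_{12}$ is zero, so $X_1$ is independent of the vector $(X_2,\ldots,X_n)$. To invoke the induction hypothesis on $(X_2,\ldots,X_n)$, I need that sub-vector to itself be multivariate normal with a diagonal covariance matrix; this is supplied by Lemma 3 applied with $\boldA=[\mathbf{0}\;\;\boldI_{n-1}]$ (which is trivially full row-rank) and $\boldb=\mathbf{0}$, yielding that $(X_2,\ldots,X_n)$ is multivariate normal with covariance matrix equal to the lower-right $(n-1)\times(n-1)$ block of $\boldSigma$, which is again diagonal. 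By the induction hypothesis, $X_2,\ldots,X_n$ are mutually independent. Combining this with the independence of $X_1$ from $(X_2,\ldots,X_n)$ gives mutual independence of $X_1,\ldots,X_n$.

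The only conceptual obstacle is the standard trap that pairwise independence need not imply mutual independence in general; the argument above sidesteps it by using Lemma 4 to get independence of a single coordinate from the \emph{entire} remaining block at each step, rather than merely between pairs. The other easy-to-overlook point is the need to certify that the marginal distribution of $(X_2,\ldots,X_n)$ remains multivariate normal before reapplying the hypothesis, which is exactly what Lemma 3 is for. No calculation beyond recognising the relevant sub-blocks of $\boldSigma$ is required.
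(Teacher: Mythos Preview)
Your argument is correct, and it is essentially the natural elaboration of what the paper does: the paper does not spell out a proof of this proposition but merely records it as ``a consequence of Lemma 4,'' and your induction on $n$---peeling off one coordinate at a time via Lemma~4 and using Lemma~3 to ensure the remaining block is still multivariate normal---is exactly how one makes that consequence precise. There is no meaningful divergence in approach.
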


After looking into a number of renowned modern statistical
textbooks, which are supposed to have incorporated the latest
developments in the whole literature on this subject, we found two
typical proofs. They are commented below.

\paragraph{Proof in \cite[Section 3.6.3]{hogg13}.}
This proof first shows the independence of $\overline{X}$ and
$S^2$ using Lemma 4, then it shows that ${(n-1)S^2\over\sigma^2}$
has distribution $\chi^2(n-1)$ using an argument that invokes
m.g.f. a further time. This, we believe, is a drawback because the
typical reader, who is usually only a sophomore, is not expected
to have the skill of directly dealing with m.g.f.. There is a
similar proof in \cite[Section 8.5]{walpole12}, which we consider
is somewhat less rigorous than the one in \cite{hogg13}.

\paragraph{Proof in \cite[Section 7.3]{groot89} and \cite[Section
8.3]{groot12}.} This proof shows the independence and the
$\chi^2(n-1)$ distribution in one single step. It defines a new
vector $\boldY=\boldsymbol O\boldsymbol Z$, where $\boldsymbol O$
is an orthogonal matrix and the first row of $\boldsymbol O$ is
$[{1\over \sqrt{n}},\ldots,{1\over \sqrt{n}}]$, so that
$Y_1=\sqrt{n}\thinspace\overline{Z}$ and the sum of squares of the
other entries of $\boldY$ is $W$. This proof is algebraic, without
using advanced tools, and hence is much simpler and easier to
understand than the proof in \cite{hogg13}. However, there is
still a little drawback of this proof: it is nonconstructive in
that it only states the existence of the orthogonal matrix
$\boldsymbol O$, without giving it specifically. While not
affecting the rigor of the proof, this drawback does hurt its
pedagogical value.

\section{Proposed Constructive Algebraic Proof}

We consider the proof in \cite{groot89,groot12} nearly perfect,
and we seek to make it fully perfect by fixing its drawback we
just mentioned, i.e. by explicitly constructing the $\boldsymbol
O$ matrix. In fact, in \cite[page 478]{groot12} the Gram-Schmidt
orthogonalization method is suggested for constructing the
$\boldsymbol O$ matrix, but no hint is given about the choice of
the starting matrix. We tried that method with the starting matrix
being the matrix obtained by replacing the first row of the
identity matrix by $[{1\over \sqrt{n}},\ldots,{1\over \sqrt{n}}]$,
and we found that the resulting orthogonal matrix is very ugly and
prohibitively difficult to describe. Therefore we tend to believe
that Gram-Schmidt orthogonalization is not an elegant method of
construction for our purpose here. However, we finally succeeded
in finding an elegant construction. Let us now illustrate it by a
few base examples.
\begin{equation}
\boldsymbol O_2 = \left[\begin{array}{cc}
{1\over\sqrt{2}} & {-1\over\sqrt{2}} \\
{1\over\sqrt{2}} & {1\over\sqrt{2}}
\end{array} \right].
\end{equation}
\begin{equation}
\boldsymbol O_3 = \left[\begin{array}{ccc}
{1\over\sqrt{2}} & {-1\over\sqrt{2}} & 0\\
{1\over\sqrt{6}} & {1\over\sqrt{6}} & {-2\over\sqrt{6}} \\
{1\over\sqrt{3}} & {1\over\sqrt{3}} & {1\over\sqrt{3}}
\end{array} \right].
\end{equation}
\begin{equation}
\boldsymbol O_4 = \left[\begin{array}{cccc}
{1\over\sqrt{2}} & {-1\over\sqrt{2}} & 0 & 0 \\
{1\over\sqrt{6}} & {1\over\sqrt{6}} & {-2\over\sqrt{6}} & 0\\
{1\over\sqrt{12}} & {1\over\sqrt{12}} & {1\over\sqrt{12}} & {-3\over\sqrt{12}}\\
{1\over\sqrt{4}} & {1\over\sqrt{4}} & {1\over\sqrt{4}} &
{1\over\sqrt{4}}
\end{array} \right].
\end{equation}
\begin{equation}
\boldsymbol O_5 = \left[\begin{array}{ccccc}
{1\over\sqrt{2}} & {-1\over\sqrt{2}} & 0 & 0 & 0\\
{1\over\sqrt{6}} & {1\over\sqrt{6}} & {-2\over\sqrt{6}} & 0 & 0\\
{1\over\sqrt{12}} & {1\over\sqrt{12}} & {1\over\sqrt{12}} & {-3\over\sqrt{12}} & 0\\
{1\over\sqrt{20}} & {1\over\sqrt{20}} & {1\over\sqrt{20}} & {1\over\sqrt{20}} & {-4\over\sqrt{20}}\\
{1\over\sqrt{5}} & {1\over\sqrt{5}} & {1\over\sqrt{5}} &
{1\over\sqrt{5}} & {1\over\sqrt{5}}
\end{array} \right].
\end{equation}

\vskip 8pt The general method of construction is contained in the
following key lemma.
\begin{lem} For every integer $n\ge 2$, define the matrix
$\boldsymbol O_n=[o_{ij}]_{n\times n}$ by: \vskip 3pt

\begin{equation}\mbox{For each }i\mbox{ with }1\le i\le n-1,\mbox{
 }o_{ij}=\left\{\begin{array}{cl}
{1\over\sqrt{i(i+1)}}, & \mbox{ for }j\le i,  \\
{-i\over\sqrt{i(i+1)}}, & \mbox{ for }j=i+1, \\
0, & \mbox{otherwise;}
\end{array}\right.\end{equation}

\begin{equation} o_{nj}={1\over \sqrt{n}}\mbox{ for all }1\le
j\le n.\end{equation}

\vskip 3pt Then $\boldsymbol O_n$ is orthogonal, i.e. $\boldsymbol
O_n \boldsymbol O_n^T=\boldsymbol I$.
\end{lem}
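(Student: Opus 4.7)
The plan is to verify $\boldsymbol O_n \boldsymbol O_n^T = \boldsymbol I$ directly by computing all inner products of pairs of rows. Denote by $\boldsymbol r_i$ the $i$-th row of $\boldsymbol O_n$. I would first record two structural observations about the rows, which together make the rest of the verification almost trivial:
\begin{equation}
\text{(a) For }1\le i\le n-1,\ \text{the row sum }\sum_{j=1}^n o_{ij}=0,
\end{equation}
since row $i$ has $i$ entries equal to $1/\sqrt{i(i+1)}$ and one entry equal to $-i/\sqrt{i(i+1)}$; and
\begin{equation}
\text{(b) For }k>i,\ \text{all entries }o_{k,1},\ldots,o_{k,i+1}\text{ are equal}
\end{equation}
(to $1/\sqrt{k(k+1)}$ if $k<n$, or to $1/\sqrt{n}$ if $k=n$), because the nonzero initial block of row $k$ has length $k+1>i+1$, or row $k$ is the constant last row.

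Next I would handle the diagonal entries $\boldsymbol r_i\cdot\boldsymbol r_i=1$. For $i\le n-1$ this is a one-line calculation: $\|\boldsymbol r_i\|^2 = i\cdot\frac{1}{i(i+1)} + \frac{i^2}{i(i+1)} = \frac{i+i^2}{i(i+1)} = 1$. For the last row, $\|\boldsymbol r_n\|^2 = n\cdot\frac{1}{n}=1$.

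For the off-diagonal entries, take $i<k$. By observation~(b), the scalar $c_{k,i}:=o_{k,j}$ is constant for $j=1,\ldots,i+1$, and by the definition of row $i$, $o_{ij}=0$ for $j>i+1$. Therefore
\begin{equation}
\boldsymbol r_i\cdot\boldsymbol r_k = \sum_{j=1}^{i+1} o_{ij}\,o_{kj} = c_{k,i}\sum_{j=1}^{i+1} o_{ij} = c_{k,i}\sum_{j=1}^{n}o_{ij}=0
\end{equation}
by observation~(a). This covers every pair of distinct rows uniformly, regardless of whether $k=n$ or $k<n$.

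There is no real obstacle: the whole argument is elementary, and the only ``trick'' is to isolate observations~(a) and~(b) so that the off-diagonal case collapses to a single line instead of a case analysis on $(i,k)$. I would present (a), (b), the two norm computations, and the orthogonality display in that order, concluding that $\boldsymbol O_n\boldsymbol O_n^T=\boldsymbol I$.
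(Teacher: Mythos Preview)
Your proof is correct and follows essentially the same approach as the paper's: a direct verification that the rows of $\boldsymbol O_n$ form an orthonormal set, via the same elementary entrywise computations. The only difference is organizational---you unify the paper's two off-diagonal cases (row $i$ against row $n$, and row $i$ against row $j$ with $i<j\le n-1$) into a single line via your observations (a) and (b), whereas the paper treats them separately.
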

\begin{proof}
Let $\boldsymbol P=\boldsymbol O_n \boldsymbol
O_n^T=[p_{ij}]_{n\times n}$.

1) If $1\le i \le n-1$, then $p_{ii}=\sum_{k=1}^n
o_{ik}^2=\sum_{k=1}^{i} {1\over i(i+1)}\thinspace + \thinspace
{i^2\over i(i+1)}=1$.

2) $p_{nn}=\sum_{k=1}^n o_{nk}^2=\sum_{k=1}^n {1\over n}=1$.

3) If $1\le i \le n-1$, then $p_{in}=p_{ni}=\sum_{k=1}^n
o_{ik}o_{nk}=\sum_{k=1}^{i} {1\over \sqrt{i(i+1)}}{1\over
\sqrt{n}}\thinspace + \thinspace {-i\over \sqrt{i(i+1)}}{1\over
\sqrt{n}} = 0$.

4) If $1\le i\neq j\le n-1$, without loss of generality, let us
assume $i<j$, then \[p_{ij}=p_{ji}=\sum_{k=1}^n o_{ik}o_{jk} =
\sum_{k=1}^{i+1}
o_{ik}o_{jk}={1\over\sqrt{j(j+1)}}\sum_{k=1}^{i+1} o_{ik}=0.\]

Thus we have shown $\boldsymbol O_n \boldsymbol O_n^T=\boldsymbol
I$.\end{proof} \vskip 4pt

Now for the sake of self-completeness of this paper, we here give
a proof of Theorem 2. It uses the same idea as the proof in
\cite[page 478]{groot12} except for our explicit construction of
$\boldsymbol O$ and a few minor details.

\begin{proof}[Proof of Theorem 2]
Denote $\boldZ=[Z_1,\ldots,Z_n]^T$. Define the random vector
$\boldY=[Y_1,\ldots,Y_n]^T$ by
\begin{equation}
\boldY =\boldsymbol O_n \boldZ
\end{equation} where $\boldsymbol O_n$ is defined by (9,10).

From (10) we know that \begin{equation} Y_n=\sum_{i=1}^n
{Z_i\over\sqrt{n}}=\sqrt{n}\thinspace\overline{Z}.\end{equation}
It is obvious that $Y_n$ has the distribution $N(0,1)$.

Furthermore, by Lemma 6, we have \[\sum_{i=1}^n Y_i^2 =
\boldY^T\boldY = \boldZ^T \boldsymbol O_n^T \boldsymbol O_n
\boldZ=\boldZ^T \boldZ =\sum_{i=1}^n Z_i^2.\] Therefore,
\[\sum_{i=1}^{n-1} Y_i^2=\sum_{i=1}^{n} Y_i^2\thinspace-\thinspace Y_n^2=\sum_{i=1}^{n} Z_i^2\thinspace-\thinspace n\overline{Z}^2
=\sum_{i=1}^n (Z_i-\overline{Z})^2.\] We have thus obtained the
relation \begin{equation}W=\sum_{i=1}^n
(Z_i-\overline{Z})^2=\sum_{i=1}^{n-1} Y_i^2.\end{equation}

By Lemma 3, $Y_1,\ldots,Y_n$ have multivariate normal distribution
with covariance matrix $\boldsymbol O_n I \boldsymbol O_n^T=I$,
which is diagonal, therefore by Proposition 5,
\begin{equation}
Y_1,\ldots,Y_n\mbox{ all have the }N(0,1)\mbox{ distribution and
are mutually independent}.
\end{equation} Since $W$ is entirely based on
$Y_1,\ldots,Y_{n-1}$, and $\overline{Z}={Y_n\over\sqrt{n}}$, (14)
implies that $W$ and $\overline{Z}$ are independent.

Finally, (13) and (14) together imply that $W$  has distribution
$\chi^2(n-1)$.
\end{proof}

\section{Conclusion}

The proof proposed here of Student's theorem is algebraic and
fully constructive. To our best knowledge, such a construction has
not appeared in the literature before. A constructive proof is
expected to make the reader more comfortable and consequently
enhance their understanding of this important result. We believe
this paper to be of significant pedagogical value in statistical
education, and hope the construction proposed here to be included
in future textbooks.

\bibliographystyle{unsrt}

\end{document}